\newcommand{\argmax}[1]{{\operatorname{arg}\,\max_{#1}}\,}
\newtheorem{theorem}{Theorem}
\newtheorem{corollary}{Corollary}
\newtheorem{remark}{Remark}
\begin{document}

\title{Rate Control under Finite Blocklength for Downlink Cellular Networks with Reliability Constraints}
\author{
	\IEEEauthorblockN{	Onel L. Alcaraz L\'{o}pez, %
		Hirley Alves, 
		Matti Latva-aho
	}	
	\IEEEauthorblockA{Centre for Wireless 	Communications (CWC), Oulu, Finland\\}
	\{onel.alcarazlopez, hirley.alves, matti.latva-aho\}@oulu.fi}	
\maketitle

\begin{abstract}
	Coming cellular systems are envisioned to open up to new  services and applications with high reliability and low latency requirements. In this paper we focus on the rate allocation problem in downlink cellular networks with Rayleigh fading and stringent reliability constraints. We propose a rate control strategy to cope with those requirements making use only of topological characteristics of the scenario, the reliability constraint and the number of antennas that are available at the receiver side. Numerical results show the feasibility of the ultra-reliable operation when the number of antennas increases, and also that our results remain valid even when operating at short blocklength as far as the amount of information to be transmitted is not too small.
\end{abstract}
\section{Introduction}
The advent of fifth generation (5G) of wireless systems opens up new possibilities and gives rise to a new category of use cases termed ultra reliable low latency communications (URLLC) \cite{Popovski.2014}, where services are characterized by very stringent requirements. Some examples are \cite{Schulz.2017}:
 factory automation, with maximum latency around $0.25$-$10$ms and maximum error probability
 of $10^{-9}$; smart grids ($3$-$20$ms, $10^{-6}$), professional audio ($2$ms, $10^{-6}$), etc. In general, there is a fundamental trade-off between delay and reliability metrics due to the fact that by relaxing one of them, we can enhance the performance of the other. In fact, Long-Term Evolution (LTE) already offers guaranteed bit rate that can support packet error rates down to $10^{-6}$, however, the delay budget goes up to
 $300$ms which includes radio, transport and core network latencies \cite{Singh.2017}. Therefore, the interplay between these metrics makes physical layer design of URLLC very complicated \cite{Hyoungju.2017}.

 The principles for supporting URLLC are discussed in \cite{Popovski.2017} including various elements of the system design, such as use of various  diversity sources, design of packets and access protocols. Shared diversity resources are explored more deeply in \cite{Kotaba.2018}  when multiple  connections are only intermittently active in order to support URLLC. Authors in \cite{She.2017} address the 
 delay and packet loss components in URLLC and the network availability for supporting the quality of service of users, while some tools for resource optimization are presented. 
 The minimum energy required to transmit $k$ information bits with a given reliability over a multiple-antenna Rayleigh block-fading channel is investigated in  \cite{Yang.2017}, while also in a multi-antenna setup the trade-off between reliability, throughput, and latency when transmitting short packets is identified in  \cite{Durisi.2016}.  
 In \cite{Dosti.2017}, an energy efficient power allocation strategy for the Chase Combining Hybrid Automatic Repeat Request (CCHARQ) is proposed to meet the reliability constraints of URLLC systems. Cooperative communications are also considered in literature, e.g., \cite{Nouri.2017}, and \cite{Lopez.2017,Lopez2.2017} for wireless powered communications, as a viable alternative to direct communication setups  \cite{Lopez3.2017}.
 
 In this paper we focus on the rate allocation problem in downlink URLLC cellular networks with Rayleigh fading. The system is composed of a multi-cell setup where multiple base stations (BSs) are interfering an URLLC link with multiple antennas at receiver side. The main contributions of this work can be listed as follows:
 \begin{itemize}
 	\item we propose a rate allocation scheme that meets the stringent reliability constraints of the system. The allocated rate depends only on topological characteristics of the scenario, the reliability constraint and the number of antennas that are available at the user equipment device (UE) side;
 	\item  we attain accurate closed-form approximations for the rate to be allocated when the UE operates using the Selection Combining (SC) and Maximum Ratio Combining (MRC) schemes;
 	\item numerical results show the superiority of the MRC scheme and also the feasibility of the ultra-reliable operation when the number of antennas increases at the UE. 
 	\item we show that our analytical results remain valid even when operating at short blocklength, low latency, as far as the amount of information to be transmitted is not too small.
 \end{itemize}

Next, Section \ref{system} introduces the system model and assumptions. Section \ref{INF} presents the rate allocation strategy, while Section \ref{FIN} discuss some aspects related with the low latency requirement. Finally, Section \ref{conclusions} concludes the paper.

\textit{Notation:} $X\!\sim\!\mathrm{Exp}(1)$ is a normalized exponential distributed random variable with Cumulative Distribution Function (CDF)  $F_X(x)=1-e^{-x}$, while $Y\!\sim\! L(p,q)$ is a Lomax random variable with Probability Density Function (PDF) $f_Y(y|p,q)\!=\!q\big(1\!+\!\frac{q}{p}y\big)^{-1\!-\!p}\!$ and  CDF $F_Y(y|p,q)\!=\!1\!-\!\big(1\!+\!\frac{q}{p}y\big)^{-p}$. Also, $Q(x)\!=\!\int_{x}^{\infty}\!\!\frac{1}{\sqrt{2\pi}}e^{-t^2/2}\mathrm{d}t$ and $\Gamma(p,x)=\int_x^{\infty}t^{p-1}e^{-t}\mathrm{d}t$ are the Gaussian Q-function and the incomplete gamma function, respectively.
\section{System Model}\label{system}
Consider a multi-cell downlink cellular network where a collection of $\eta+1$ BSs, $\mathrm{BS}_0, \mathrm{BS}_1,..., \mathrm{BS}_{\eta}$,  are spatially distributed in a given area $\mathcal{A}\subseteq\mathbb{R}^2$. We denote the collection of those points as $\Phi$, and this deployment can be seen in general as an instantaneous realization of some Point Process (PP). We focus on the transmission over one given channel while assuming that each $\mathrm{BS}_j, j=0,...,\eta$, is currently using that channel to transmit data to its corresponding $\mathrm{
UE}$, $\mathrm{UE}_j$. We denote the distance between each $\mathrm{BS}_j$ and $\mathrm{UE}_0$ as $r_j$ and adopt a channel model that comprises standard path-loss with exponent $\alpha$ and Rayleigh fading. We define the link between $\mathrm{BS}_0$ and $\mathrm{UE}_0$, as the \textit{typical link}, and we focus on its performance. Fig.~\ref{Fig1} shows an example topology with $\eta=10$ interfering BSs.

$\mathrm{UE}_0$ is equipped with $M$ antennas sufficiently separated such that the fading affecting the received signal in each antenna can be assumed independent and full gain from spatial diversity can be attained. Channel State Information (CSI) is available at $\mathrm{UE}_0$, and each BS transmits with fixed power. We consider an interference-limited wireless system given a dense deployment of small cells where the impact of noise is neglected\footnote{However, the impact of the noise could easily be incorporated without substantial changes.}. Thus, the Signal-to-Interference Ratio (SIR) perceived in each antenna is 
%
$\mathrm{SIR}_i={h_ir_0^{-\alpha}}/{I_i} 
$, 
with
\vspace*{-2mm}
\begin{align}
I_i=\sum_{j\in\Phi^*} g_{j,i}r_{j}^{-\alpha},\label{Ii}
\vspace*{-1mm}
\end{align}
where $h_i,g_{j,i}\sim\mathrm{Exp}(1)$ are the power channel gain coefficients of the typical link at the $i$-th antenna, and of the interfering channel between $j$-th BS and $i$-th antenna of $\mathrm{UE}_0$, respectively. Finally, consider that signal transmitted by $\mathrm{BS}_0$ spans over $n$ channel uses and containing  $k$ information bits, e.g., fixed transmission rate $r=k/n$ (bpcu).
\begin{figure}[t!]
	\centering
	\subfigure{\includegraphics[width=0.42\textwidth]{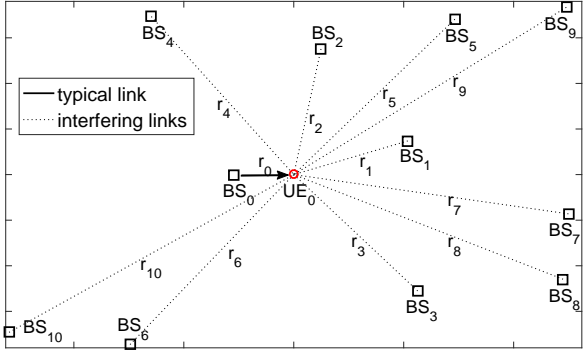}}
	\vspace*{-2mm}
	\caption{Illustration of the system model with $\eta=10$.}
	\vspace*{-2mm}		
	\label{Fig1}
\end{figure}
\section{Asymptotic rate control under reliability constraint}\label{INF}
Herein we consider the asymptotic formulation (infinite blocklength), in which an error in decoding the information occurs when $\mathrm{SIR}<\theta$, $\theta=2^{r}-1$. Therefore, the distribution of the SIR, $\mathbb{P}(\mathrm{SIR}<\theta|\Phi)=F_{\mathrm{SIR}_i}(\theta|\Phi)$, is necessary and it is given in the following result.
\begin{theorem}\label{the1}
	The CDF of the SIR at each antenna $i=1,...,M$ is given by
	\begin{align}
	F_{\mathrm{SIR}_i}(\gamma|\Phi)=1-\prod_{j\in\Phi^*}\frac{1}{1+\gamma r_0^{\alpha}r_{j}^{-\alpha}},\label{Fsnr}
	\end{align}	
	which is upper-bounded by
	\begin{align}
	F_{\mathrm{SIR}_i}(\gamma|\Phi)\approx 1-\Big(1+\frac{\gamma}{\eta}\beta\Big)^{-\eta}  \label{Fsnrapp} 
	\end{align}
	with  $\beta=r_0^{\alpha}\sum_{j=1}^{\eta}r_{j}^{-\alpha}$ and $\Phi^*=\Phi\backslash \mathrm{BS}_0$. 
\end{theorem}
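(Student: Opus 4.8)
The plan is to establish the exact expression \eqref{Fsnr} by a standard interference Laplace-transform argument, and then to derive the bound \eqref{Fsnrapp} from a single convexity inequality.

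For the exact CDF, I would start from $\mathrm{SIR}_i = h_i r_0^{-\alpha}/I_i$ and rewrite the outage event as $\{h_i < \gamma r_0^{\alpha} I_i\}$. Conditioning on $\Phi$ fixes every distance $r_j$, so the remaining randomness sits in $h_i$ and in the interfering gains $g_{j,i}$ that make up $I_i$. Conditioning further on $I_i$ and using that $h_i\sim\mathrm{Exp}(1)$ has CDF $1-e^{-x}$, the outage probability becomes $1-\mathbb{E}\big[e^{-\gamma r_0^{\alpha} I_i}\,\big|\,\Phi\big]$, i.e. one minus the Laplace transform of the interference evaluated at $s=\gamma r_0^{\alpha}$. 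Since $I_i=\sum_{j\in\Phi^*} g_{j,i}r_j^{-\alpha}$ is a sum of independent contributions, this Laplace transform factorizes over $j$, and each factor $\mathbb{E}\big[e^{-s g_{j,i}r_j^{-\alpha}}\big]$ equals $1/(1+s r_j^{-\alpha}) = 1/(1+\gamma r_0^{\alpha} r_j^{-\alpha})$ by the moment generating function of the unit-mean exponential. Collecting the product over $j\in\Phi^*$ yields \eqref{Fsnr}.

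For the bound \eqref{Fsnrapp}, I would first observe that it is equivalent to showing the product in \eqref{Fsnr} is \emph{lower}-bounded by $(1+\gamma\beta/\eta)^{-\eta}$, since subtracting from $1$ reverses the inequality direction. Writing $a_j=\gamma r_0^{\alpha} r_j^{-\alpha}$, so that $\sum_{j=1}^{\eta} a_j=\gamma\beta$, the claim reduces to $\prod_{j=1}^{\eta}(1+a_j)\le\big(1+\tfrac{1}{\eta}\sum_{j} a_j\big)^{\eta}$. This is exactly the AM--GM inequality applied to the $\eta$ positive numbers $1+a_j$: their geometric mean cannot exceed their arithmetic mean $1+\tfrac{1}{\eta}\sum_{j} a_j$. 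Raising both sides to the power $\eta$ delivers the stated bound, with equality precisely when all $a_j$ coincide, i.e. when the interferers are equidistant.

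I expect the only genuine decision point to be in the second part: recognizing that the product-to-power estimate is nothing more than AM--GM (equivalently, Jensen's inequality for the concave map $x\mapsto\ln(1+x)$) and verifying that it points in the direction that makes \eqref{Fsnrapp} an upper bound on the CDF rather than a lower one. The first part is a routine conditioning-and-Laplace-transform computation and should present no real obstacle.
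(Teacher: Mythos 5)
Your proposal is correct and follows essentially the same route as the paper: the exact CDF via conditioning on the interference and factorizing the Laplace transform of $I_i$ over the independent exponential gains, and the bound via the AM--GM inequality applied to the factors $1+\gamma r_0^{\alpha}r_j^{-\alpha}$, with the inequality direction correctly reversed when passing to $1$ minus the reciprocal product. Your added observations (equality iff the $a_j$ coincide, the Jensen reformulation) are consistent with the paper's Remark~\ref{Re1} on why the bound is tight in the left tail.
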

\begin{proof}
	We proceed as follows
	\begin{align}
	F_{\mathrm{SIR}_i|\Phi}(\gamma)&=\mathbb{P}\big(\mathrm{SIR}_i<\gamma\big|\Phi\big)=1-\mathbb{P}\big(\mathrm{SIR}_i>\gamma\big|\Phi\big)\nonumber\\
	&=1-\mathbb{P}\big(h_i>\gamma r_0^{\alpha} I_i\big|\Phi\big)\stackrel{(a)}{=}1-\mathbb{E}_g \big[ e^{-\gamma r_0^ {\alpha} I_i}\big]\nonumber\\
	&=1-\prod_{j\in\Phi^*}\mathbb{E}_g \big[e^{-\gamma g_{j,i}r_0^{\alpha}r_{j}^{-\alpha}}\big], \label{FsnrP}
	\end{align}
	where $(a)$ follows from the complementary CDF of exponential random variable $h_i$, and \eqref{Fsnr} comes directly after \eqref{FsnrP}. Now we focus on the upper bound.
	\begin{align}
	\prod_{j\in\Phi^*}(1\!+\!\gamma r_0^{\alpha}r_{j}^{-\alpha})&\!=\!\prod_{j=1}^{\eta}(1\!+\!\gamma r_0^{\alpha}r_{j}^{-\alpha})\!\stackrel{(a)}{\le}\!\Bigg[\frac{\sum\limits_{j=1}^{\eta}\big(1\!+\!\gamma r_0^{\alpha}r_{j}^{-\alpha}\big)}{\eta}\Bigg]^{\eta}\nonumber\\
	&\!\stackrel{(b)}{\le}\!\bigg[1\!+\!\frac{\gamma r_0^ {\alpha}}{\eta}\sum\limits_{j=1}^{\eta} r_{j}^{-\alpha}\bigg]^{\eta}\!\stackrel{(c)}{=}\!\bigg[1\!+\!\frac{\gamma}{\eta}\beta\bigg]^{\eta},\label{ap2}
	\end{align}
	where $(a)$ comes from  using the  relation between the geometric and the arithmetic mean, $(b)$ follows from simple algebraic transformations, and $(c)$ by adopting $\beta=r_0^{\alpha}\sum_{j=1}^{\eta} r_{j}^{-\alpha}$. Substituting \eqref{ap2} into \eqref{Fsnr} we attain \eqref{Fsnrapp}.	
\end{proof}
\begin{remark}\label{Re1}
	Both, \eqref{Fsnr} and \eqref{Fsnrapp}, converge in the left tail. This becomes evident from the proof of Theorem~\ref{the1}. Therein notice that when operating in the left tail $\prod_{j=1}^{\eta}(1+\gamma r_0^{\alpha}r_{j}^{-\alpha})$ should be close to 1, therefore each of the terms $(1+\gamma r_0^{\alpha}r_{j}^{-\alpha})\ge 1$ is expected to approximate to the unity. Hence, all of these terms are very similar between each other, and geometric mean approximates heavily  to arithmetic mean  in such scenarios. 
	\begin{figure}[t!]
		\centering
		\subfigure{\includegraphics[width=0.45\textwidth]{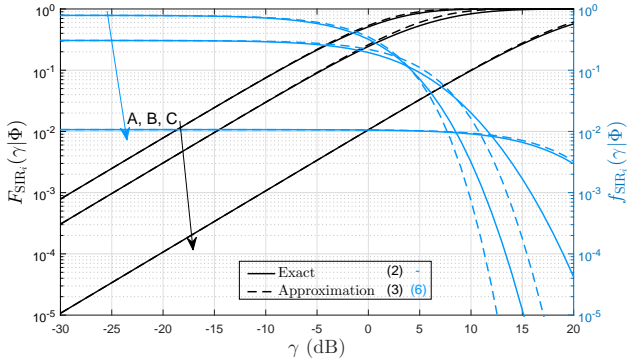}}
		\vspace*{-2mm}
		\caption{Comparison between the exact and approximate expressions of $F_{\mathrm{SIR}_i}(\gamma|\Phi)$ and $f_{\mathrm{SIR}_i}(\gamma|\Phi)$ for three different setups: $A:\ r_0=30$, $r_j=30+10j$, $j=1,...,20$; $B:\ r_0=20$, $r_j=10+20j$, $j=1,...,10$; $C:\ r_0=10$, $r_j=20+20j$, $j=1,...,4$. We set $\alpha=3.5$. The exact PDF was evaluated taking the derivative of the exact CDF \eqref{Fsnr} for each setup. }	
		\vspace*{-2mm}	
		\label{Fig2pp}
	\end{figure}	
	Also, as a consequence of \eqref{Fsnrapp} the SIR at each antenna $i=1,...,M$ is approximately a Lomax random variable with PDF given by
	\begin{align}
	f_{\mathrm{SIR}_i}(\gamma|\Phi)&\approx \beta\Big(1+\frac{\gamma}{\eta}\beta\Big)^{-\eta-1}.\label{pdf0}
	\end{align}		
    This can be represented as a scaled Lomax distribution such that $\mathrm{SIR}_i\approx\frac{\eta}{\beta} \varphi_i$ with $\varphi_i\sim L(\eta,1)$.
    
    The accuracy of the approximations in the left tail is clearly illustrated in Fig.~\ref{Fig2pp} for three different setups, thus, validating our findings. 
\end{remark}
\begin{remark}\label{Re0}
	Obtaining the PDF of the SIR directly from \eqref{Fsnr} seems intractable for large $\eta$, which is the case in dense network deployments. Also, since the upper bound is extremely tight in the left tail of the distribution, its utility is enormous because it is in that region where typical reliability constraints are, e.g., $\epsilon<10^{-1}$.
\end{remark}
\begin{remark}
	Notice also that our results are not restrictive to the adopted path loss model. In fact, $\beta$ can be expressed in a generalized way as $\beta=\ell_0\sum_{j=1}^{\eta}\ell_j$, where $\ell_0$ is the path loss in the typical link and $\ell_j,\ j=1,..,\eta$ is the path loss from the the $j$-th interfering BS to $\mathrm{UE}_0$. For static BS deployments these parameters can be easily obtained beforehand.
\end{remark}

Henceforth, we assume fixed $n$ and we are interested in finding the maximum number of bits, $k^*$, such that the reliability requirement is met. The target reliability is denoted by $1-\epsilon_{\mathrm{th}}$ where $\epsilon_{\mathrm{th}}$ is the admissible average error probability. In the following two subsection we solve this problem for SC and MRC schemes at the receiver side.
\subsection{Selection Combining -- SC}
When $\mathrm{UE}_0$ uses the SC scheme, the error in decoding the receiving message is defined as
\begin{align}
\epsilon&=\mathbb{P}\Big(\max_{i=1,...,M}\mathrm{SIR}_i<\theta\Big|\Phi\Big)=F_{\Omega}(\theta)\nonumber\\
&=\mathbb{P}\Big(\mathrm{SIR}_1<\theta,\mathrm{SIR}_2<\theta,\cdots,  \mathrm{SIR}_M<\theta \Big|\Phi\Big)\nonumber\\
&\stackrel{(a)}{=}\mathbb{P}\Big(\mathrm{SIR}_i<\theta\Big)^M\stackrel{(b)}{=}F_{\mathrm{SIR}_i}(\theta)^M,\label{errSC}
\end{align}
where $\Omega=\max\limits_{i=1,...,M}\mathrm{SIR}_i$, $(a)$ follows from the fact that $\mathrm{SIR}_i$ is distributed independently on each antenna. This is because the fading $h,g$ is i.i.d and the  topology is deterministic (non random). Finally, $(b)$ comes from using the definition of the CDF of $\mathrm{SIR}_i$.

\begin{theorem}
Assuming the distance from $\mathrm{UE}_0$ to the serving and interfering BSs is known and $\mathrm{UE}_0$ uses SC, the maximum number of bits to be transmitted, $k^*$,  while guaranteeing the reliability constraint given by $\epsilon_{\mathrm{th}}$, is the solution of%
\begin{align}
\prod_{j=1}^{\eta}\Big(1+(2^{k^*/n}-1) r_0^{\alpha}r_{j}^{-\alpha}\Big)=\frac{1}{1-\epsilon_{\mathrm{th}}^{1/M}},\label{eqSC}
\end{align}		
and is approximated by
\begin{align}
k^*\approx n\log_2\bigg(\frac{\eta}{\beta}\Big(\big(1-\epsilon_{\mathrm{th}}^{1/M}\big)^{-\frac{1}{\eta}}-1\Big)+1
\bigg).\label{eqSCap}
\end{align}
\end{theorem}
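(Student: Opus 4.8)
The plan is to exploit directly the error expression \eqref{errSC} together with the CDF characterizations of Theorem~\ref{the1}, since the statement is essentially an inversion of the reliability constraint. First I would argue that the constrained maximization of $k$ subject to $\epsilon\le\epsilon_{\mathrm{th}}$ is attained at equality. Indeed, the threshold $\theta=2^{k/n}-1$ is strictly increasing in $k$, and $F_{\mathrm{SIR}_i}$ is a CDF, hence non-decreasing in $\theta$; consequently $\epsilon=F_{\mathrm{SIR}_i}(\theta)^M$ from \eqref{errSC} is non-decreasing in $k$. Therefore the largest admissible $k$, namely $k^*$, is the one for which the reliability constraint is active, i.e. $F_{\mathrm{SIR}_i}(\theta)^M=\epsilon_{\mathrm{th}}$, which upon taking the $M$-th root yields $F_{\mathrm{SIR}_i}(\theta)=\epsilon_{\mathrm{th}}^{1/M}$ with $\theta=2^{k^*/n}-1$.

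For the exact relation \eqref{eqSC}, I would substitute the exact CDF \eqref{Fsnr} into $F_{\mathrm{SIR}_i}(\theta)=\epsilon_{\mathrm{th}}^{1/M}$, obtaining $1-\prod_{j=1}^{\eta}\big(1+\theta r_0^{\alpha}r_{j}^{-\alpha}\big)^{-1}=\epsilon_{\mathrm{th}}^{1/M}$. Isolating the product gives $\prod_{j=1}^{\eta}\big(1+\theta r_0^{\alpha}r_{j}^{-\alpha}\big)^{-1}=1-\epsilon_{\mathrm{th}}^{1/M}$, and taking reciprocals followed by re-inserting $\theta=2^{k^*/n}-1$ reproduces \eqref{eqSC} verbatim. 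This part requires no approximation and is purely algebraic once the monotonicity observation is in place.

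For the closed-form approximation \eqref{eqSCap}, I would instead substitute the tight left-tail upper bound \eqref{Fsnrapp} into the same equality, which is justified precisely because the reliability constraints of interest lie in the left tail (cf. Remark~\ref{Re0}), where by Remark~\ref{Re1} the bound is essentially exact. This yields $\big(1+\tfrac{\theta}{\eta}\beta\big)^{-\eta}=1-\epsilon_{\mathrm{th}}^{1/M}$, which I would solve explicitly for $\theta$ by raising both sides to the power $-1/\eta$ and rearranging to $\theta=\tfrac{\eta}{\beta}\big((1-\epsilon_{\mathrm{th}}^{1/M})^{-1/\eta}-1\big)$. Finally, inverting $\theta=2^{k^*/n}-1$ through $k^*=n\log_2(\theta+1)$ delivers \eqref{eqSCap} directly.

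I do not anticipate a genuine technical obstacle here: both parts reduce to solving $F_{\mathrm{SIR}_i}(\theta)=\epsilon_{\mathrm{th}}^{1/M}$ for $\theta$ and then unwinding the substitution $\theta=2^{k^*/n}-1$. The only point deserving explicit justification is the reduction from the inequality-constrained optimization to the equality $\epsilon=\epsilon_{\mathrm{th}}$, i.e. the monotonicity of $\epsilon$ in $k$; the remainder is routine algebra. The sole modeling caveat is that \eqref{eqSCap} inherits the approximation error of \eqref{Fsnrapp}, but since operation occurs in the left tail this error is negligible, as already established in Theorem~\ref{the1} and its remarks.
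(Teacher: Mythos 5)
Your proposal is correct and follows essentially the same route as the paper: set $F_{\mathrm{SIR}_i}(2^{k/n}-1)^M=\epsilon_{\mathrm{th}}$ from \eqref{errSC}, then substitute the exact CDF \eqref{Fsnr} to get \eqref{eqSC} and the left-tail approximation \eqref{Fsnrapp} to get \eqref{eqSCap}. The only difference is that you make explicit the monotonicity argument justifying equality at the optimum, which the paper leaves implicit.
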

\begin{proof}
	Based on \eqref{errSC} we know that $k^*$ is the solution of $F_{\mathrm{SIR}_i}(2^{k/n}-1)^M=\epsilon_{\mathrm{th}}$. Using the exact expression for $F_{\mathrm{SIR}_i}(\theta)$ given in \eqref{Fsnr} we attain \eqref{eqSC}, while using \eqref{Fsnrapp} as an approximation of \eqref{Fsnr} we reach \eqref{eqSCap}.
\end{proof}
Notice that finding the solution of  \eqref{eqSC} for large $\eta$ is analytically heavy, and numerical methods would be required. Even for finite, and not so  large $\eta$, solving \eqref{eqSC} is difficult, thus, \eqref{eqSCap} provides an easy way of doing so. Also, since \eqref{Fsnrapp} is tight in the left tail, it is expected that \eqref{eqSCap} to be very accurate for $\epsilon_{\mathrm{th}}<10^{-1}$ and we provide numerical evidence in Section~\ref{results}.

Finally, the PDF of the SIR after SC is given by
\begin{align}
f_{\Omega}(x)&=\frac{d}{d x}F_{\Omega}(x)=\frac{d}{d x}F_{\mathrm{SIR}_i}(x)^M\nonumber\\
&=MF_{\mathrm{SIR}_i}(x)^{M-1}f_{\mathrm{SIR}_i}(x)\nonumber\\
&\approx M\beta\bigg(1\!-\!\Big(1\!+\!\frac{x}{\eta}\beta\Big)^{-\eta}\bigg)^{M-1}\!\Big(1\!+\!\frac{x}{\eta}\beta\Big)^{-\eta-1}.\label{PDFsc}
\end{align}
\subsection{Maximal Radio Combining -- MRC}
When $\mathrm{UE}_0$ uses the MRC scheme, the error in decoding the receiving message is defined as
\begin{align}
\epsilon&=\mathbb{P}\Big(\sum_{i=1}^{M}\mathrm{SIR}_i<\theta\Big|\Phi\Big)=\mathbb{P}\big(\Psi<\theta\Big|\Phi\big)=F_{\Psi}(\theta),\label{Fpsi}
\end{align}
where $\Psi=\sum_{i=1}^{M}\mathrm{SIR}_i$. From Remark~\ref{Re1}, $\Psi$ can be represented as $\frac{\eta}{\beta}\sum_{i=1}^{M}\varphi_i$ where
%
$f_{\varphi_i}(x)=\eta(1+x)^{-\eta-1}$. 
Thus,
\begin{align}
\epsilon&\approx\mathbb{P}\Big(\sum_{i=1}^{M}\varphi_i<\frac{\beta\theta}{\eta}\Big).\label{epmrc}
\end{align}
\begin{corollary}
	Assuming the distance from $\mathrm{UE}_0$ to the serving and interfering BSs is known and $\mathrm{UE}_0$ uses MRC, the maximum number of bits to be transmitted, $k^*$, while guaranteeing the reliability constraint given by $\epsilon_{\mathrm{th}}$, is approximated by
	\begin{align}
	k^*\approx n\log_2\Big(\frac{\eta}{\beta}F_{\upsilon}^{-1}\big(\epsilon_{\mathrm{th}}\big)+1\Big), \label{kmrc}
	\end{align}
	where $\upsilon=\sum_{i=1}^{M}\varphi_i$.
\end{corollary}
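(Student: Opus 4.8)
The plan is to reduce the reliability requirement to a single scalar equation in $\theta$ and then invert it. Starting from the MRC error probability in \eqref{Fpsi}, I would invoke the scaled Lomax representation of Remark~\ref{Re1}, under which $\Psi=\sum_{i=1}^{M}\mathrm{SIR}_i\approx\frac{\eta}{\beta}\upsilon$ with $\upsilon=\sum_{i=1}^{M}\varphi_i$ and $\varphi_i\sim L(\eta,1)$ i.i.d. This immediately recasts the error as $\epsilon\approx\mathbb{P}\big(\upsilon<\tfrac{\beta\theta}{\eta}\big)=F_{\upsilon}\big(\tfrac{\beta\theta}{\eta}\big)$, which is exactly the form anticipated in \eqref{epmrc}.

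Next I would argue that the constraint binds at equality. Since $F_{\upsilon}$ is a CDF it is nondecreasing, and $\theta=2^{k/n}-1$ is strictly increasing in $k$; hence $\epsilon$ is nondecreasing in $k$, so the largest admissible number of bits is attained when $\epsilon=\epsilon_{\mathrm{th}}$. Setting $F_{\upsilon}\big(\tfrac{\beta\theta}{\eta}\big)=\epsilon_{\mathrm{th}}$ and applying the (generalized) inverse gives $\tfrac{\beta\theta}{\eta}=F_{\upsilon}^{-1}(\epsilon_{\mathrm{th}})$, i.e. $\theta=\tfrac{\eta}{\beta}F_{\upsilon}^{-1}(\epsilon_{\mathrm{th}})$. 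Substituting $\theta=2^{k^*/n}-1$ and solving for $k^*$ yields \eqref{kmrc} after taking base-two logarithms.

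The main obstacle is that $F_{\upsilon}$ --- the CDF of a sum of $M$ i.i.d.\ Lomax (Pareto type~II) variables --- has no elementary closed form, so its inverse cannot be written in terms of standard functions. The statement sidesteps this by leaving $F_{\upsilon}^{-1}$ implicit, but a fully rigorous treatment would require either an explicit (e.g.\ series or special-function) characterization of $F_{\upsilon}$, or at least a justification that it is continuous and strictly increasing on the relevant left-tail region so that $F_{\upsilon}^{-1}(\epsilon_{\mathrm{th}})$ is well defined and unique. I would also emphasize that the approximation sign in \eqref{kmrc} stems solely from the Lomax representation of Remark~\ref{Re1}: conditioned on that representation the remaining steps (monotonicity, inversion, change of variable) are exact, so the overall error is controlled entirely by the tightness of \eqref{Fsnrapp} in the left tail, which is precisely the regime $\epsilon_{\mathrm{th}}<10^{-1}$ of practical interest.
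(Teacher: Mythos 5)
Your proposal is correct and follows essentially the same route as the paper's own (one-line) proof: write $\epsilon\approx F_{\upsilon}\big(\tfrac{\beta\theta}{\eta}\big)$ via the scaled Lomax representation of Remark~\ref{Re1}, set it equal to $\epsilon_{\mathrm{th}}$, and isolate $\theta=2^{k^*/n}-1$. Your added remarks on monotonicity and the well-definedness of $F_{\upsilon}^{-1}$ are sound elaborations the paper leaves implicit, not a different method.
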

\begin{proof}
	Since $\epsilon\approx\mathbb{P}\big(\upsilon<\frac{\beta\theta}{\eta}\big)=F_{\upsilon}\big(\frac{\beta\theta}{\eta}\big)$ we only require to isolate $\theta$ there while using $\theta=2^{k/n}-1$.
\end{proof}

The distribution of $\upsilon$ was already attained in \cite[CDF in Eq.(4.13)]{Zhang2.2017}. Unfortunately $F_{\upsilon}(x)$  is very difficult to evaluate, therefore, very time-consuming. In fact, it is also impossible to be evaluated for many combinations of parameter values $(M,\eta,x)$, e.g, relatively small $x$ and relatively large $M$ and/or $\eta$, for which calculation crashes with the software/hardware limitations. Following result addresses that issue.
\begin{theorem}\label{th3}
	The PDF and CDF of $\upsilon$ are approximated by
	\begin{align}
	f_{\upsilon}(x)&\!\approx\!\frac{\eta^M M^{M\!-\!1}}{(M-1)!}\Big(1\!+\!\frac{x}{M}\Big)^{-1\!-\!M\eta}\!\ln^{M-1}\Big(1\!+\!\frac{x}{M}\Big),\label{pdfG}\\
	F_{\upsilon}(x)&\!\approx 1-\frac{\Gamma\big(M,\eta M\ln(1+x/M)\big)}{(M-1)!},\label{cdfG}
	\end{align}
	where \eqref{cdfG} converges to \cite[Eq.(4.13)]{Zhang2.2017} in the left tail.
\end{theorem}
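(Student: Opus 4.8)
The plan is to exploit a logarithmic change of variables that converts each Lomax summand into an exponential random variable, thereby reducing the sum to a tractable Erlang law, and then to transport that law back to $\upsilon$ through the very same arithmetic--geometric mean approximation that underlies Theorem~\ref{the1}.

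First I would set $Z_i=\ln(1+\varphi_i)$ for each $i=1,\dots,M$. Since $f_{\varphi_i}(x)=\eta(1+x)^{-\eta-1}$ corresponds to $F_{\varphi_i}(x)=1-(1+x)^{-\eta}$, a one-line computation gives $\mathbb{P}(Z_i\le z)=\mathbb{P}\big(\varphi_i\le e^{z}-1\big)=1-e^{-\eta z}$, so $Z_i\sim\mathrm{Exp}(\eta)$. As the $\varphi_i$ are i.i.d., the quantity $S=\sum_{i=1}^{M}Z_i=\ln\prod_{i=1}^{M}(1+\varphi_i)$ is a sum of $M$ i.i.d.\ exponentials and hence follows an Erlang distribution of shape $M$ and rate $\eta$, with $f_S(s)=\frac{\eta^M}{(M-1)!}s^{M-1}e^{-\eta s}$ and $F_S(s)=1-\frac{\Gamma(M,\eta s)}{(M-1)!}$ for integer $M$.

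Next I would connect $S$ to the target variable $\upsilon=\sum_{i}\varphi_i$ by the same inequality that drives Theorem~\ref{the1}. By the arithmetic--geometric mean relation, $\prod_{i=1}^{M}(1+\varphi_i)\le\big(1+\tfrac{1}{M}\sum_i\varphi_i\big)^{M}=(1+\upsilon/M)^{M}$, with equality approached in the left tail where all $\varphi_i$ are small, exactly as argued in Remark~\ref{Re1}. Adopting the approximation $\prod_i(1+\varphi_i)\approx(1+\upsilon/M)^{M}$ yields $S\approx M\ln(1+\upsilon/M)$, i.e.\ the deterministic monotone map $\upsilon=M\big(e^{S/M}-1\big)$. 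Pushing the Erlang law through this map then gives everything at once: $F_{\upsilon}(x)\approx\mathbb{P}\big(S\le M\ln(1+x/M)\big)=F_S\big(M\ln(1+x/M)\big)$, which is precisely \eqref{cdfG}; and applying the change-of-variables formula with Jacobian $\mathrm{d}S/\mathrm{d}x=(1+x/M)^{-1}$ gives $f_{\upsilon}(x)\approx f_S\big(M\ln(1+x/M)\big)(1+x/M)^{-1}$, which simplifies directly to \eqref{pdfG}.

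The main obstacle is not the algebra but the justification of the approximation step. One must argue, as in Remark~\ref{Re1}, that the arithmetic--geometric mean substitution is asymptotically exact in the left tail---the regime in which the reliability constraint $\epsilon_{\mathrm{th}}<10^{-1}$ actually lives---so that the resulting Erlang-based expressions are accurate precisely where they are needed. The claimed convergence of \eqref{cdfG} to \cite[Eq.(4.13)]{Zhang2.2017} in the left tail should then follow from this same tightness, and I would verify it by checking that both CDFs share the same leading-order behaviour as $x\to 0^{+}$.
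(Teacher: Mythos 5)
Your proposal is correct and complete: the log-transform $Z_i=\ln(1+\varphi_i)\sim\mathrm{Exp}(\eta)$ makes $S=\sum_i Z_i$ Erlang$(M,\eta)$, the AM--GM substitution $\prod_i(1+\varphi_i)\approx(1+\upsilon/M)^M$ gives the monotone map $S\approx M\ln(1+\upsilon/M)$, and pushing the Erlang law through it reproduces \eqref{pdfG} and \eqref{cdfG} exactly, with the left-tail convergence claim following from the same tightness argument as in Remark~\ref{Re1}. The paper itself omits the proof (deferring it to an extended version), so there is nothing to compare against line by line, but your route is precisely what the paper's surrounding discussion suggests the authors had in mind, and the one genuine gap you correctly flag---quantifying the error of replacing the random product by a deterministic function of $\upsilon$ outside the regime where all $\varphi_i$ are small---is exactly the source of the slight divergence the paper reports for small $\eta$ and non-restrictive reliability.
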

\begin{proof}
The proof will be included in an extended version of this work.
\end{proof}
\begin{figure}[t!]
	\centering
	\subfigure{\includegraphics[width=0.45\textwidth]{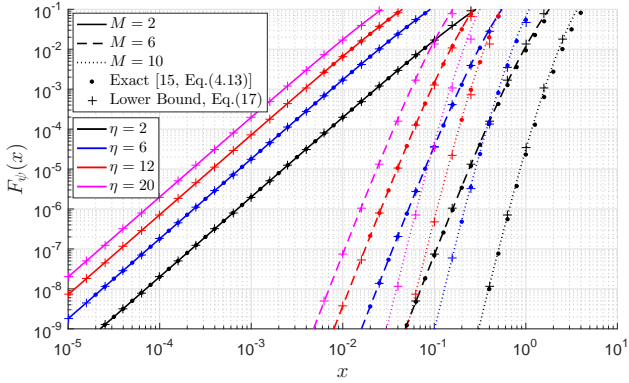}}
	\vspace*{-2mm}
	\caption{Left tail of $F_{\upsilon}(x)$. Comparison between the exact, \cite[Eq.(4.13)]{Zhang2.2017}, approximate, \eqref{cdfG}, and lower bound, \eqref{app} with $\ln(1+x/M)=x/M$, expressions.}		
	\label{Fig3}
\end{figure}
Fig.~\ref{Fig3} shows the incredible accuracy of \eqref{cdfG} in the left tail. Only a slight divergence from the exact expression is observable when $\eta$ is relatively small, e.g., $\eta=2$, at the same time that the reliability is not too restrictive, $F_{\upsilon}(x)\ge 10^{-2}$. This is in-line with the arguments we used when proving Theorem~\ref{th3}. Using expressions \eqref{pdfG} and \eqref{cdfG} is twofold advantageous: i) they are  relatively easy to evaluate and ii) they can be evaluated in regions where the exact expressions cannot. Regarding this last aspect notice that \cite[Eq.(4.13)]{Zhang2.2017} was nonviable to evaluate for $\eta=20$ and also for $\eta=12,\ M=10,\ F_{\upsilon}(x)\ge 10^{-4}$, just for mentioning two examples.

Although an easy-to-evaluate expression for $F_{\upsilon}(x)$ was given in \eqref{cdfG}, it is not analytically invertible, thus, $F_{\upsilon}^{-1}(\epsilon_{\mathrm{th}})$ requires to be computed numerically. Following result aims at alleviating this issue.
\begin{corollary}
	$F_{\upsilon}^{-1}(\epsilon_{\mathrm{th}})$ approximates to
	\begin{align}
	   \frac{(M!)^{1/M}}{\eta}\big|\ln\big(1-\epsilon_{\mathrm{th}}^{1/M}\big)\big|,\label{eqEth}
	\end{align}
	specially when $\epsilon_{\mathrm{th}}$ is very restrictive and $M$ is not too large.
\end{corollary}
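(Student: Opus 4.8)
The plan is to invert the tight left-tail expression \eqref{cdfG} for $F_{\upsilon}$ analytically and then linearise twice. Setting $F_{\upsilon}(x)=\epsilon_{\mathrm{th}}$ in \eqref{cdfG} turns the task into solving $\Gamma\big(M,w\big)/(M-1)! = 1-\epsilon_{\mathrm{th}}$ for the single unknown $w=\eta M\ln(1+x/M)$; since $x\mapsto w$ is strictly increasing, it suffices to find $w^*$ and then recover $x^*=M\big(e^{w^*/(\eta M)}-1\big)$. First I would rewrite the left-hand side as the regularised lower incomplete gamma $1-\Gamma(M,w)/(M-1)!$ and study its small-argument behaviour, which governs the whole inversion because $\upsilon=\sum_{i=1}^{M}\varphi_i$ can only be small when every $\varphi_i$ is small.

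Second, in the left tail $\epsilon_{\mathrm{th}}$ is small, forcing $w^*$ small, so I would expand. For integer $M$, $1-\Gamma(M,w)/(M-1)! = 1 - e^{-w}\sum_{k=0}^{M-1} w^k/k!$, and a termwise cancellation of the first $M$ powers leaves the leading monomial $w^M/M!$. Equating $w^M/M! = \epsilon_{\mathrm{th}}$ then yields $w^* \approx (M!)^{1/M}\,\epsilon_{\mathrm{th}}^{1/M}$; this is precisely where the prefactor $(M!)^{1/M}$ in \eqref{eqEth} originates.

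Third, I would linearise the outer map. Because $w^*/(\eta M)$ is small in the same regime, $x^* = M\big(e^{w^*/(\eta M)}-1\big) \approx w^*/\eta$, giving $F_{\upsilon}^{-1}(\epsilon_{\mathrm{th}}) \approx (M!)^{1/M}\epsilon_{\mathrm{th}}^{1/M}/\eta$. The final step replaces $\epsilon_{\mathrm{th}}^{1/M}$ by $|\ln(1-\epsilon_{\mathrm{th}}^{1/M})|$: since $-\ln(1-a)=a+a^2/2+\cdots$, the two agree to first order when $a=\epsilon_{\mathrm{th}}^{1/M}$ is small, which is exactly the stated regime "$\epsilon_{\mathrm{th}}$ very restrictive and $M$ not too large." The logarithmic form is also what one gets by inverting, exactly, a left-tail lower bound of the type $(1-(1+x/M)^{-\eta})^M$ on $F_{\upsilon}$ (derivable from $\ln(1+\varphi_i)\sim\mathrm{Exp}(\eta)$ via Remark~\ref{Re1}); adopting it inherits the logarithmic dependence of that bound while correcting its looser prefactor $M$ to the sharper $(M!)^{1/M}$ supplied by the incomplete-gamma scaling, so the approximation tracks the bound away from the extreme tail yet keeps the correct leading-order behaviour.

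The main obstacle I anticipate is bookkeeping the two linearisations together: I must check that the higher-order terms neglected in the incomplete-gamma expansion and in $e^{w^*/(\eta M)}-1$ are of strictly smaller order across the whole target region, and delimit "$M$ not too large" quantitatively so that $\epsilon_{\mathrm{th}}^{1/M}$ remains small enough for the $\epsilon_{\mathrm{th}}^{1/M}\leftrightarrow|\ln(1-\epsilon_{\mathrm{th}}^{1/M})|$ swap and the prefactor substitution to be justified simultaneously. Reconciling the two routes — the gamma expansion that produces $(M!)^{1/M}$ and the bound inversion that produces the logarithm — into one controlled estimate is the crux, rather than any individual calculation.
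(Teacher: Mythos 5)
Your derivation is correct and lands on the right formula, but it is a genuinely different route from the paper's. The paper invokes the known lower bound \cite[Eq.~(8.10.11)]{Thompson.2011} on the regularised incomplete gamma, $F_{\upsilon}(x)\ge\big(1-e^{-(M!)^{-1/M}\eta M\ln(1+x/M)}\big)^M$ (exact for $M=1$, slowly diverging as $M$ grows, tight in the left tail), linearises $\ln(1+x/M)\approx x/M$, and then inverts the resulting expression \emph{exactly}; the logarithmic form $|\ln(1-\epsilon_{\mathrm{th}}^{1/M})|$ falls out directly from that inversion, which is precisely why the corollary is stated in that form and why the caveat "$M$ not too large" appears. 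You instead Taylor-expand the lower incomplete gamma to its leading monomial $w^M/M!$, invert that to get $w^*\approx(M!)^{1/M}\epsilon_{\mathrm{th}}^{1/M}$, and only at the very end swap $\epsilon_{\mathrm{th}}^{1/M}$ for $|\ln(1-\epsilon_{\mathrm{th}}^{1/M})|$ by first-order agreement. Both arguments are sound at leading order and share the $\ln(1+x/M)\approx x/M$ step; yours is self-contained and makes the order of the neglected terms explicit, while the paper's buys a one-sided bound valid beyond the asymptotic regime and an intrinsic (rather than cosmetic) justification for the logarithmic form. Note that the auxiliary bound you sketch in your third paragraph, $\big(1-(1+x/M)^{-\eta}\big)^M$, is not quite the one the paper uses -- the correct exponent is $(M!)^{-1/M}M\eta$, not $\eta$ -- though this does not affect your main line of argument.
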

\begin{proof}
	According to \cite[Eq. (8.10.11)]{Thompson.2011} we have that
	\begin{align}
	F_{\upsilon}(x)\ge \Big(1-e^{-(M!)^{-1/M}\eta M\ln\big(1+x/M\big)}\Big)^M,\label{app}
	\end{align}
	where equality holds for $M=1$ and diverges slowly when $M$ increases. Additionally, this lower bound is very tight in the left tail of the curve, e.g., when $\epsilon_{\mathrm{th}}$ is more restrictive. We require to isolate $x$ from $F_{\upsilon}(x)=\epsilon_{\mathrm{th}}$, and notice that for $\epsilon_{\mathrm{th}}\rightarrow 0$ we have $x\rightarrow 0$, thus, we can take $\ln(1+x/M)\lesssim x/M$, which makes \eqref{app} even more accurate when $\epsilon_{\mathrm{th}}$ is not too small. The tightness of the lower bound is clearly shown in Fig.~\ref{Fig3}. Finally we attain \eqref{eqEth} straightforwardly. 
\end{proof}
\section{Finite Blocklength Impact}\label{FIN}
The information theoretic analysis for infinite blocklength says that no error occurs as long as $\mathrm{SIR}>\theta$. However, if we communicate over a noisy channel and we are restricted to use a finite number of channel uses $n$, then no protocol is able to achieve perfectly reliable communication. In that sense, Polyanskiy \textit{et. al.} \cite{Polyanskiy.2010} attained, for AWGN and $n\ge 100$ channel uses, an accurate approximation that characterizes the error probability non-asymptotically as 
\begin{equation}\label{e1}
\epsilon_{_\mathrm{FB}}^{\mathrm{AWGN}}\approx Q\Biggl(\frac{C(\mathrm{SIR})-k/n}{\sqrt{V(\mathrm{SIR})/n}}\Biggl),
\end{equation}
where $C(\mathrm{SIR})=\log_2(1+\mathrm{SIR})$ is the Shannon capacity and $V(\mathrm{SIR})=\left(1-\frac{1}{(1+\mathrm{SIR})^2}\right)(\log_2e)^2$ is the channel dispersion, which measures the stochastic variability of the channel relative to a deterministic channel with the same capacity. Notice that since the CSI is available at $\mathrm{UE}_0$ the value of the $\mathrm{SIR}$ is easy to obtain, thus, the quasi-static fading channel becomes conditionally Gaussian on that, and we only require to take expectation over random variable $\mathrm{SIR}$ to attain the corresponding average error probability \cite[eq.(59)]{Yang.2014}
%
\begin{align}\label{err}
\epsilon_{_\mathrm{FB}}&\!\approx\! \mathbb{E}\Biggl[\!Q\!\Biggl(\!\frac{C(\mathrm{SIR})\!-\!k/n}{\sqrt{V(\mathrm{SIR})/n}}\!\Biggl)\!\Biggl]\!\approx\!\! \int\limits_{0}^{\infty}\!\! Q\!\Biggl(\!\frac{C(\mathrm{SIR})\!-\!k/n}{\sqrt{V(\mathrm{SIR})/n}}\!\Biggl)\!f_{_\mathrm{SIR}}(\mathrm{SIR}),
\end{align}
%
where $f_{_\mathrm{SIR}}(\mathrm{SIR})$ is given in \eqref{PDFsc} and \eqref{pdfG} for SC and MRC, respectively. Thus, taking into account the finite blocklength formulation, the maximum number of bits to be transmitted is
%
\begin{align}
&\argmax{k\in \mathbb{N}} k\nonumber\\
&\mathrm{s.t.}\ \epsilon_{_\mathrm{FB}}\le \epsilon_{\mathrm{th}},\label{prob}
\end{align}
that can be solve numerically by exhaustive search\footnote{This problem is equivalent to find the solution of $\epsilon_{_\mathrm{FB}}= \epsilon_{\mathrm{th}}$ in terms of $k$ and set $k^*\leftarrow \lfloor k\rfloor$. However, there is no numerically way to do that in one go; and solving \eqref{prob} always requires to evaluate several values of $k$ until the solution is found.}.

However, and as it has been shown in \cite{Mary.2016} for Nakagami-m and Rice channels, the quasi-static fading makes disappear the effect of the finite blocklength, thus, the asymptotic outage probability, 
which is the Laplace approximation of \eqref{err}, is a good match in those scenarios and
specially when i) $k$ is not extremely small and ii) line of sight parameter is not extremely large. The latter condition is because the fading channel tends to behave as AWGN channel when the line of sight parameter increases significantly, and at AWGN the error probability given in \eqref{e1} differs substantially from the asymptotic results.  Therefore, using the value of $k^*$ obtained in Section~\ref{INF} for the SC and MRC schemes as an initial guess when solving \eqref{prob} reduces greatly the searching time. The procedure is 
\begin{algorithmic} [1]
	\State Calculate $k^*$ according to \eqref{eqSCap}, \eqref{kmrc}, for SC and MRC schemes respectively  \label{line01} 
	\State 	$k^*\leftarrow \lfloor k^*\rfloor$
	\State Evaluate \eqref{err} \label{line02} 
	\If{$\epsilon_{_\mathrm{FB}}>\epsilon_{\mathrm{th}}$} \label{line04}    
	\State Decrease $k^*$ and return to line \ref{line02}	\label{line06}
	\EndIf        \label{line07}                     
	\State End \label{line08}         
	\vspace*{-1mm}
\end{algorithmic} 
\vspace*{-2mm}
\section{Numerical Analysis}\label{results}
Numerical results are presented in this section to evaluate the system performance in terms of maximum number of bits to be transmitted (Figs.~\ref{Fig2}, \ref{Fig4} and \ref{Fig5}) and maximum reachable rate (Fig.~\ref{Fig6}), both under stringent reliability and delay constraints. We compare SC and MRC schemes, while evaluating also the performance under the asymptotic and non-asymptotic blocklength formulations.
\begin{figure}[t!]
	\centering
	\subfigure{\includegraphics[width=0.45\textwidth]{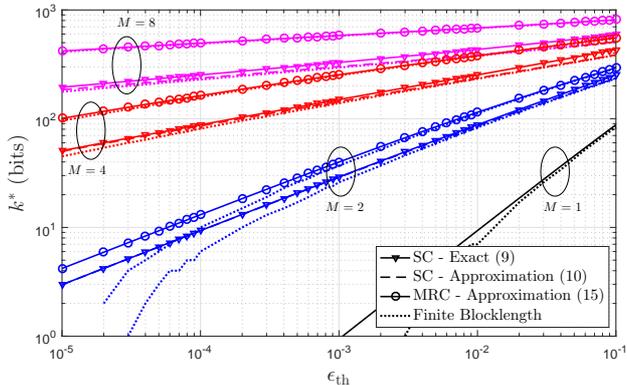}}
	\vspace*{-3mm}
	\caption{Performance of $k^*$ as function of $\epsilon_{\mathrm{th}}$ for $n=200$ channel uses, $\eta=10$, $M\in\{1,2,4,8\}$, and SC, MRC diversity schemes.  Topology with $\alpha=3.5$, $r_0=20$, $r_j=10+20j,\ j=1,...,\eta$.}
	\vspace*{-3mm}		
	\label{Fig2}
\end{figure}
Fig.~\ref{Fig2} shows the results as a function of the error probability constraint for receiving devices with $M\in\{1,2,4,8\}$ antennas and operating with a blocklength of 200 channel uses. The topology under study consists of $\eta=10$ BSs located $10+20j$ (m), $j=1,...,10$ away from the link of interest which is $20$ m long, while the path loss exponent is set to $\alpha=3.5$, thus, $\beta=0.306102$. We can notice that
\begin{itemize}
	\item operating with only one antenna is practically unfeasible for the region where $\epsilon_{\mathrm{th}}<10^{-2}$ is required, while as the number of antennas increases we can operate in the ultra-reliable region, e.g., $\epsilon_{\mathrm{th}}<10^{-3}$, with even relatively large data rates;
	\item approximation \eqref{eqSCap} is very accurate, and only when $\epsilon_{\mathrm{th}}$ and $M$ are relatively large, e.g., $M\ge 8$, the gap with respect to the exact value given in \eqref{eqSC}, although still small, can be observed. This is because \eqref{eqSCap} uses upper bound \eqref{Fsnrapp}, which is tight when the required error probability, $\epsilon_{\mathrm{th}}$, is small as discussed in Remark~\ref{Re1} and \ref{Re0}. However, for multiple antenna setups the equivalent error probability is $\epsilon_{\mathrm{th}}^{1/M}$ which increases with $M$, thus, relatively affecting the accuracy. Unfortunately, this analysis could only be done for the SC scheme since for MRC the exact expression would come from first getting the PDF of $\mathrm{SIR}_i$ from \eqref{eqSC}, which is already unfeasible. However, approximation for MRC is expected to be even more accurate since the target error probability remains unchanged;
	\item as expected, MRC overcomes the SC scheme in all the region. As the number of antennas increases, the gap increases. Interestingly and for the example topology, MRC doubles the number of bits that can be transmitted under the SC scheme when operating with $M=4$ and $M=8$ with $\epsilon_{\mathrm{th}}=10^{-5}$;
	\item the asymptotic and finite blocklength results match accurately when $k^*$ is not too small, e.g., $k^*> 30$ bits, while for extremely small data payloads the gap increases considerably. For instance, operating with $M=2$ and SC scheme, the asymptotic formulation says that we can transmit with up to $8$ bits with an error probability of $7\times 10^{-5}$, while under the finite blocklength formulation the maximum amount of information to be transmitted is reduced to only $4$ bits. 
\end{itemize}

\begin{figure}[t!]
	\centering
	\subfigure{\includegraphics[width=0.45\textwidth]{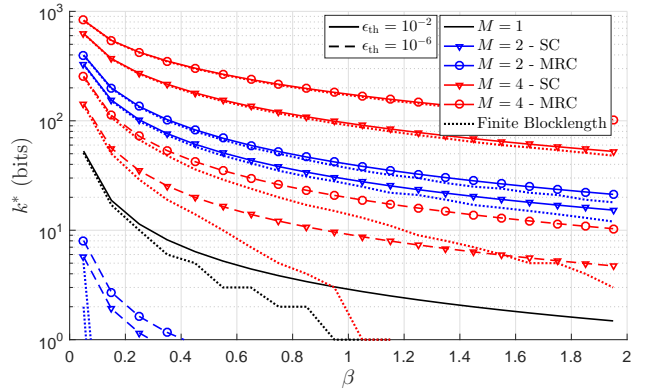}}
	\vspace*{-2mm}
	\caption{Performance of $k^*$ as function of $\beta$ for $n=200$ channel uses, $\eta=8$, $\epsilon_{\mathrm{th}}\in\{10^{-2},10^{-6}\}$, $M\in\{1,2,4\}$, and SC, MRC diversity schemes. }		
	\label{Fig4}
	\vspace*{-3mm}
\end{figure}
All the other remaining figures focus only on the results coming from evaluating the provided approximate expressions, therefore, they rely entirely on the topological parameter $\beta$. In fact, Fig.~\ref{Fig4} shows the performance as a function of $\beta$ for a setup with $8$ interfering BSs, while operating with $n=200$ channel uses. As $\beta$ increases, the performance decreases as expected from observing \eqref{eqSCap} and \eqref{kmrc}. This is because a decrement on $\beta$ is due to a larger length of the desired link and/or smaller distances to the interfering BSs and/or greater pathloss exponent. Once again we can notice that the multi-antenna configuration enables the ultra-reliability operation, while the superiority of the MRC scheme is evidenced again. Also, the asymptotic formulation is accurate when $k^*$ is not too small as previously discussed.
\begin{figure}[t!]
	\centering
	\subfigure{\includegraphics[width=0.45\textwidth]{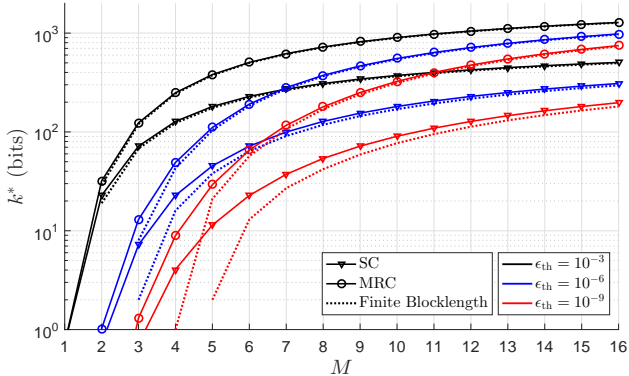}}
	\vspace*{-3mm}
	\caption{Performance of $k^*$ as function of $M$ for $n=400$ channel uses, $\eta=8$, $\beta=0.8$, $\epsilon_{\mathrm{th}}\in\{10^{-3},10^{-6},10^{-9}\}$, and SC, MRC diversity schemes. }		
	\vspace*{-2mm}
	\label{Fig5}
\end{figure}

Fig.~\ref{Fig5} shows the performance as a function of $M$ when operating with $8$ interfering BSs, $\beta=0.8$ and $n=400$ channel uses. Once again MRC outperforms SC, and notice that the gap between these two schemes tends to increase as $M$ increases. It can be observed that  the attainable data rates increase for the given reliability constraints  as $M$ increases. Notice also that for a given  $k^*$, the asymptotic formulation differs more from the finite blocklength results when the required reliability increases. As discussed before this is more obvious for the region of extremely small $k^*$, e.g., $k^*<30$ bits. 
\begin{figure}[t!]
	\centering
	\subfigure{\includegraphics[width=0.45\textwidth]{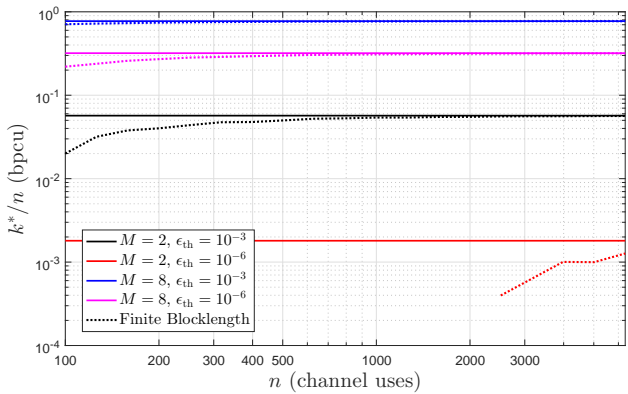}}
	\vspace*{-3mm}
	\caption{Performance of the rate, $k^*/n$, as function of blocklength, $n$, and operating with SC with $\eta=8$, $\beta=0.8$, $\epsilon_{\mathrm{th}}\in\{10^{-3},10^{-6}\}$.}		
	\vspace*{-3mm}
	\label{Fig6}
\end{figure}

Finally, Fig.~\ref{Fig6} shows the attainable rate, $k^*/n$, as a function of the blocklength $n$ and under the SC operation\footnote{Here it is only shown SC for better visualization of the results, however notice that the performance of MRC is similar but shifted up.} for a setup with $8$ interfering BSs and $\beta=0.8$. The asymptotic curves are presented as straight lines because rather on independently the $k^*$ or $n$ values, the asymptotic formulation depends on the rate $k^*/n$. Notice that the gap between the asymptotic and finite blocklength formulations tend to vanish as $n$ increases, however, this is a slower process as $k^*/n$ is smaller, which occurs when $M$ and/or $\epsilon_{\mathrm{th}}$ decrease. As shown in this and all the previous figures, the stringent the reliability requirement, the smaller the amount of information that can be transmitted.
%
\section{Conclusion}\label{conclusions}
In this paper, we proposed a rate allocation scheme for a downlink cellular system operating with stringent reliability constraints.  
The allocated rate depends on i) $\beta$, which is a function of the pathloss exponent and the distances from the served UE to all BSs; ii) the number of interfering BSs; iii) the reliability constraint; and iv) the number of antennas that are available at the UE side. We reached accurate closed-form approximations for the attainable rate when the UE operates using the SC and MRC schemes. The numerical results show the superiority of the MRC scheme and also the feasibility of the ultra-reliable operation when the number of antennas increases at the UE. Finally, we show that our analytical results remain valid even when operating at short blocklength as far as the amount of information to be transmitted is not too small. 
%
\section*{Acknowledgment}
We thank the support of Academy of Finland 6Genesis Flagship (Grant n.318927, and  n.303532, n.307492) and by the Finnish Funding Agency for Technology and Innovation (Tekes), Bittium Wireless, Keysight Technologies Finland, Kyynel, MediaTek Wireless, Nokia Solutions and Networks.

\appendices 

\bibliographystyle{IEEEtran}
\bibliography{IEEEabrv,references}
\end{document}